%% file: main.tex
\documentclass{article}

\usepackage[letterpaper,top=2cm,bottom=2cm,left=3cm,right=3cm,marginparwidth=1.75cm]{geometry}
\usepackage{graphicx} 
\usepackage{amsmath}
\usepackage[colorlinks=true, allcolors=blue]{hyperref}
\usepackage{dirtytalk}
\usepackage{amsthm}
\usepackage{mathtools}
\usepackage{tabularx}
\usepackage{booktabs}
\usepackage{array}
\usepackage{amssymb}
\usepackage[T1]{fontenc}

\title{A Note on EFX Inapproximability for Chores}
\author{
Vasilis Christoforidis\\
Aristotle University of Thessaloniki\\
Archimedes, Athena Research Center\\
\texttt{vgchrist@csd.auth.gr}
}
\date{}

\newtheorem{theorem}{Theorem}
\newtheorem{lemma}{Lemma}

\newtheorem{corollary}{Corollary}

\newtheorem{proposition}{Proposition}

\begin{document}

\maketitle

\input{abstract}

\input{intro}

\input{contributions}

\input{related}

\input{prelims}

\input{subadditive_chores}

\input{submodular_chores}

\input{discussion}

\bibliographystyle{alpha}
\bibliography{sample}

\end{document}

%% file: abstract.tex
\begin{abstract}
    We study the approximability of EFX allocations for indivisible chores under complement-free cost functions. The non-existence of exact EFX allocations for general monotone functions for chores is known from \cite{CS24}, and a result of \cite{akrami2026} transfers such comparison-based non-existence results to monotone submodular, and hence subadditive, functions. We strengthen this picture by giving explicit constant-factor inapproximability results for submodular and subadditive functions.  
    
    Our main construction is a three-agent, six-chore instance with monotone
subadditive cost functions for which no $\alpha$-EFX allocation exists for any
$1\le \alpha<2^{1/3}\approx 1.26$, thus narrowing the gap with the known upper bound of $2$. The construction is obtained by refining the original counterexample of \cite{CS24} and using the approach of \cite{mackenzie2026}. We also give a weighted-coverage realization of the ordinal profile, yielding an instance in which no $\alpha$-EFX allocation exists for any
$1\le \alpha<20/19$ under submodular costs. Thus, even within well-studied complement-free classes, EFX
for chores admits nontrivial constant lower bounds on approximability.
\end{abstract}

%% file: intro.tex
\section{Introduction}

Fair division of indivisible items is a central problem at the interface of
economics and theoretical computer science. Given a set of agents with
heterogeneous preferences over bundles of items, the goal is to find an
allocation that satisfies a meaningful fairness guarantee. One of the most
studied notions in this setting is envy-freeness up to any item (EFX). For
goods, EFX requires that no agent envy another after the hypothetical removal of any single
good from the other agent's bundle. For chores, where items are undesirable, the analogous condition requires that no agent prefer another
agent's bundle after the removal of any single chore from her own bundle.

Despite substantial progress, EFX remains poorly
understood. For goods, the existence of EFX allocations remained open for several years, even for general monotone valuations. Very recent breakthrough work has
produced the first-ever counterexample for general monotone valuations via SAT-solving
\cite{akrami2026}. \cite{mackenzie2026} later obtained non-existence of approximate EFX allocations for  submodular and subadditive valuations with simpler constructions. For chores, non-existence under general monotone cost functions was shown by \cite{CS24}. The results of \cite{akrami2026} also imply that any monotone total order over subsets can be realized via a submodular function. Hence, for comparison-based fairness notions such as EFX, counterexamples for monotone functions can be transferred to the submodular setting.
Combined with the chore counterexample of \cite{CS24}, this already yields exact EFX non-existence for submodular, and hence subadditive, chore costs.
The remaining issue addressed in this note is w.r.t. approximation: how large an approximation gap can be obtained by explicit constructions?

We address this question by proving two constant-factor inapproximability results for chores.
First, we construct a
three-agent, six-chore instance with monotone subadditive cost functions for
which no $\alpha$-EFX allocation exists for any
$1\le \alpha<2^{1/3}$. This gives a concrete lower bound and narrows the gap with the known
upper bound of $2$ for three agents with subadditive cost functions \cite{afshinmehr2024}. Second, we give a weighted-coverage construction for which no $\alpha$-EFX allocation exists for any $1\le\alpha<20/19$. Both results complement the existential implications of \cite{akrami2026} and \cite{CS24} by giving small instances with nontrivial multiplicative gaps.

Our construction is based on the counterexample of \cite{CS24}, but the role
of complementarities therein is removed through a compression argument. We isolate the ordinal structure of the counterexample and realize it by new cardinal cost functions using a compression gadget. This approach is inspired by the rank-based compression technique of Mackenzie and Suzuki \cite{mackenzie2026} for goods, but here it is adapted to the chore setting and to the structure of the \cite{CS24} counterexample.
We formulate this argument as a general transfer lemma for chores. If a
monotone ordinal cost profile has no EFX allocation and each agent's cost
function takes at most $L$ distinct values on nonempty bundles, then one can
replace it by a normalized monotone subadditive cost profile with no
$\alpha$-EFX allocation for every $1\le \alpha<2^{1/(L-1)}$. Thus the
approximation guarantee obtained by the compression depends only on the number
of ordinal levels required.
A direct application of the
ideas of \cite{akrami2026} and \cite{mackenzie2026} to the original construction of \cite{CS24} gives weaker inapproximability factors. We refine the original instance to get an improved factor of  $2^{1/3}$. 

Our results are directly inspired by the recent counterexamples and  proof ideas developed in \cite{akrami2026,mackenzie2026,CS24}, but the constructions here focus on explicit lower bounds for complement-free chore division.

%% file: contributions.tex
\subsection{Our Contributions}

We study the approximability of EFX allocations for indivisible chores under
complement-free cost functions. The non-existence of exact EFX allocations for general monotone cost functions is known due to
\cite{CS24}. Taken together with Proposition 1 of \cite{akrami2026}, this already implies the non-existence of exact EFX allocations for submodular, and hence, subadditive cost functions. In light of these results, our focus is on obtaining explicit lower bounds.

Our main contribution is a constant-factor inapproximability result for
subadditive chores. We construct a three-agent, six-chore instance with
monotone subadditive cost functions such that no $\alpha$-EFX allocation exists
for any $1\le \alpha<2^{1/3}$. This narrows the gap with the known upper bound
of $2$ for subadditive chores due to \cite{afshinmehr2024}. More generally, we formulate a rank-compression lemma for chores. If a
monotone cost profile has no EFX allocation and each agent's cost function
takes at most $L$ distinct values on nonempty bundles, then it can be replaced
by normalized monotone subadditive costs with no $\alpha$-EFX allocation for
any $1\le \alpha<2^{1/(L-1)}$. This lemma is directly inspired by the rank-based
compression ideas of \cite{mackenzie2026}, but is adapted to the chore setting. We note again that we refine the original counterexample of \cite{CS24} to obtain the improved inapproximability factor.

We also give an explicit submodular lower bound. Specifically, we realize the
same ordinal profile by monotone weighted-coverage cost
functions. This yields a three-agent, six-chore instance with no
$\alpha$-EFX allocation for any $1\le \alpha<20/19$.

%% file: related.tex
\subsection{Related Work}
The existence of EFX allocations under additive valuations remains a challenging open problem, both for goods and chores, apart from some important special cases. EFX was originally introduced by \cite{teac/CaragiannisKMPS19}. Since then, a large body of work has established positive results along two main directions: instances with only a few agents and instances with restricted valuation classes.
We refer the interested reader to the survey \cite{aij/AmanatidisABFLMVW23:survey} for a more detailed overview.

\paragraph{EFX for goods.}
Exact EFX allocations are known to exist in several special cases. These include instances with a small number of agents, such as two agents with general monotone valuations \cite{/siamdm/PlautR20} and three agents with additive valuations \cite{/jacm/ChaudhuryGM24}, with subsequent extensions to more general classes
\cite{aaai/BergerCFF22,ior/AkramiACGMM25}. Positive results are also known for several restricted valuation domains, including binary valuations and variants thereof, see, e.g., \cite{tcs/AmanatidisBFHV21:stories, aaai/BabaioffEF21:dichotomous}, lexicographic preferences \cite{aaai/HosseiniSVX21:lex}, leveled valuations \cite{ipl/CC25},  instances with two or three valuation types \cite{dam/Mahara23:types,ec/HVGN025:types}, and graph instances \cite{ec/CFKS25:graphs}. On the approximation side, $0.618$-EFX allocations are known for additive valuations \cite{tcs/AmanatidisMN20}, with improved guarantees in restricted settings \cite{aamas/MarkakisS23,ec/AmanatidisFS24:frontier}. For subadditive valuations, $1/2$-EFX allocations always exist \cite{/siamdm/PlautR20}. Very recently, \cite{akrami2026} disproved the existence of EFX allocations under general monotone valuations via SAT-solving, while 
\cite{mackenzie2026} obtained stronger inapproximability results for submodular and subadditive valuations.

\paragraph{EFX for chores.} The understanding of EFX allocations is more limited for  chores. Exact existence is known only in restricted settings, including instances with (almost) identical orderings \cite{www/Li22:chores,tcs/KobayashiMS25}, variants of dichotomous cost functions (see e.g., \cite{aamas/BarmanNV23,tcs/KobayashiMS25}), additive leveled cost functions \cite{teac/GafniHLT23:copies}, and instances with two types of chores \cite{aamas/AzizLRS:chores}.
On the approximation side, early guarantees were obtained by \cite{aij/ZhouW24,CS24,afshinmehr2024}. More recently, \cite{stoc/GargMQ25} obtained in a breakthrough result a constant-factor approximation  for additive cost functions, subsequently improved to $2$-EFX in \cite{aaai/GargM26}. The existence of exact EFX allocations for three agents with additive cost functions remains open. EFX allocations need not exist under general monotone cost functions \cite{CS24}.

%% file: prelims.tex
\section{Preliminaries}
Let $N=[n]$ be a set of agents and let $M$ be a finite set of indivisible
items. An allocation is a partition $X=(X_1,\dots,X_n)$ of $M$, where $X_i$ is
the bundle assigned to agent $i$. We use the term valuation function for goods and cost function for chores. Moreover, we sometimes write $S\setminus e$ as shorthand for $S \setminus \{e\}$.

For goods, each agent $i\in N$ has a valuation function
$v_i:2^M\to\mathbb R_{\ge 0}$. We assume throughout that valuations are
normalized, i.e., $v_i(\emptyset)=0$, and monotone, i.e., $v_i(S)\le v_i(T)$ whenever
$S\subseteq T$. For $\alpha\in(0,1]$, an allocation $X$ is $\alpha$-EFX for
goods if, for every pair of agents $i,j\in N$ and every good $g\in X_j$, it
holds that $v_i(X_i)\ge \alpha\, v_i(X_j\setminus\{g\})$. The case
$\alpha=1$ is exact EFX.

For chores, each agent $i\in N$ has a cost function
$c_i:2^M\to\mathbb R_{\ge 0}$. We again assume normalization,
$c_i(\emptyset)=0$, and monotonicity, meaning that $c_i(S)\le c_i(T)$ whenever
$S\subseteq T$. For $\alpha\ge 1$, an allocation $X$ is $\alpha$-EFX for chores
if, for every pair of agents $i,j\in N$ and every chore $e\in X_i$, it holds
that $c_i(X_i\setminus\{e\})\le \alpha\, c_i(X_j)$. The case $\alpha=1$ is
exact EFX. 

We focus on the following standard classes of set functions. A monotone set
function $f:2^M\to\mathbb R_{\ge 0}$ is subadditive if
$f(S)+f(T) \ge f(S\cup T)$ for all $S,T\subseteq M$. It is submodular if $f(S)+f(T) \ge f(S \cup T) + f(S\cap T)$ for all $S,T\subseteq M$. Equivalently, it is submodular if for
all $S\subseteq T\subseteq M$ and all $e\in M\setminus T$, it satisfies the
decreasing marginal property:
$
f(S\cup\{e\})-f(S)\ge f(T\cup\{e\})-f(T)$.
A monotone set
function is superadditive if $f(S\cup T)\ge f(S)+f(T)$ for all disjoint
$S,T\subseteq M$.

%% file: subadditive_chores.tex
\section{Subadditive Cost Functions}
\label{sec:subadditiveChores}

In this section, we obtain EFX inapproximability for monotone subadditive cost functions for chores. Our starting point is the counterexample of \cite{CS24}. Although the original construction uses complementarities, the main obstruction is fundamentally ordinal and preserves symmetry. We exploit this by replacing the original cardinal costs with compressed rank values. Our main tool is the chore analogue of the ordinal compression lemma of \cite{mackenzie2026}. The resulting functions are automatically subadditive, while the strict comparisons needed to violate EFX are preserved with a multiplicative gap. We improve the bound obtained from the direct application of the ordinal
compression lemma to the counterexample of \cite{CS24}. Rather than preserving all cost levels of the original construction, 
we refine it to reduce the number of ordinal levels. This yields a stronger inapproximability factor.

The following ordinal compression lemma is the main transfer tool of the section.

\begin{lemma}
\label{lemma:compression}
    Let $M$ be a finite set of chores, and let $d_1,\dots,d_n$ be arbitrary monotone cost functions over $2^M$. Suppose that the profile $d=(d_1,\dots,d_n)$ admits no EFX allocation. Assume further that, for each agent $i$, the collection of nonempty bundles takes at most $L\ge2$ distinct $d_i$-values (i.e., the set $\{d_i(S): \emptyset \neq S \subseteq M\}$ has cardinality at most $L$). Then, there exist normalized monotone subadditive cost functions $c_1, \dots, c_n$ over $M$ such that, for every $1 \le \alpha < 2^{1/(L-1)}$, the profile $c=(c_1,\dots,c_n)$ admits no $\alpha$-EFX allocation.
\end{lemma}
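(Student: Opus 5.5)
We replace each $d_i$ by a rank-compressed function whose values all lie in $[1,2]$ on nonempty bundles. For agent $i$, list the distinct values of $d_i$ on nonempty bundles as $v_1<v_2<\dots<v_{L_i}$, where $L_i\le L$. Fix $r=2^{1/(L-1)}$ and define $c_i(\emptyset)=0$ and $c_i(S)=r^{k-1}$ whenever $S\neq\emptyset$ and $d_i(S)=v_k$. Then $c_i(S)\in[1,r^{L_i-1}]\subseteq[1,2]$ for every nonempty $S$.

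\emph{Normalization and monotonicity.} Normalization holds by definition. For monotonicity, take $S\subseteq T$. If $S=\emptyset$ the claim is trivial. Otherwise $d_i(S)\le d_i(T)$ by monotonicity of $d_i$, so the rank of $S$ is at most the rank of $T$. Since $k\mapsto r^{k-1}$ is increasing, $c_i(S)\le c_i(T)$.

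\emph{Subadditivity.} If $S$ or $T$ is empty, the inequality is trivial by monotonicity. Otherwise $c_i(S)+c_i(T)\ge 1+1=2\ge c_i(S\cup T)$. This is the one place where the choice $r^{L-1}=2$ is used, and it is exactly what caps the achievable factor.

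\emph{Strict comparisons gain a factor $r$.} For bundles $A,B$ with $d_i(A)>d_i(B)$, we claim $c_i(A)\ge r\,c_i(B)$.
\begin{itemize}
\item If $B\neq\emptyset$, then $A$ is nonempty and has a strictly higher rank, so the ratio $c_i(A)/c_i(B)$ is at least $r$.
\item If $B=\emptyset$, then $A\neq\emptyset$, so $c_i(A)\ge1$ while $r\,c_i(B)=0$.
\end{itemize}

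\emph{Transfer of non-existence.} Let $X$ be any allocation. Since $d$ admits no EFX allocation, there exist $i,j$ and $e\in X_i$ with $d_i(X_i\setminus e)>d_i(X_j)$. By the previous step, $c_i(X_i\setminus e)\ge r\,c_i(X_j)>\alpha\,c_i(X_j)$ for every $\alpha<r$.

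This strict inequality requires care when $c_i(X_j)=0$, that is, when $X_j=\emptyset$. In that case $r\,c_i(X_j)=0$, so we instead use $c_i(X_i\setminus e)\ge1>0=\alpha\,c_i(X_j)$, where $X_i\setminus e\neq\emptyset$ because $d_i(X_i\setminus e)>d_i(\emptyset)$. Hence $X$ is not $\alpha$-EFX for $c$, and since $X$ was arbitrary, no allocation is.

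The only delicate step is the edge cases involving the empty set, which the two case splits above handle.
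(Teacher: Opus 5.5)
Your proof is correct and takes essentially the same approach as the paper. The paper also maps each nonempty bundle to $\tfrac12\lambda^{\rho_i(S)}$ with $\lambda=2^{1/(L-1)}$; this is your construction rescaled into $[\tfrac12,1]$ instead of $[1,2]$, which is harmless since subadditivity and EFX ratios are scale-invariant. The subadditivity argument (two nonempty bundles already cost at least the maximum value) and the transfer of each strict $d_i$-violation to a factor-$\lambda$ violation, with $X_j=\emptyset$ handled separately, are the same as yours.
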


\begin{proof}
Let $\lambda=2^{1/(L-1)}$. Fix an agent $i$. Let $L_i$ be the number of
distinct values taken by $d_i$ on nonempty bundles, and enumerate these values
as $t_{i,0}<t_{i,1}<\cdots<t_{i,L_i-1}$. Thus $1\le L_i\le L$. For every
nonempty bundle $S$, define its rank $\rho_i(S)$ by
$d_i(S)=t_{i,\rho_i(S)}$, i.e., $\rho_i(S)=r \iff d_i(S)=t_{i,r}$. Define $c_i(\emptyset)=0$, and for every nonempty
$S\subseteq M$, set $c_i(S)=\frac12\lambda^{\rho_i(S)}$.

The function $c_i$ is normalized by definition. It is monotone because if
$S\subseteq T$ and $S\neq\emptyset$, then $T\neq\emptyset$ and
$d_i(S)\le d_i(T)$, hence $\rho_i(S)\le \rho_i(T)$ and therefore
$c_i(S)\le c_i(T)$. If $S=\emptyset$, monotonicity is immediate. 
We next prove subadditivity. For every nonempty $S$, we have
$\frac12\le c_i(S)\le \frac12\lambda^{L_i-1}\le \frac12\lambda^{L-1}=1$.
Hence, if $A$ and $B$ are both nonempty, then
$c_i(A\cup B)\le 1\le c_i(A)+c_i(B)$. If one of $A,B$ is empty, the
subadditivity inequality is immediate. Therefore $c_i$ is subadditive.

It remains to prove the transfer of the EFX violation and inapproximability. Let $X$ be an arbitrary
allocation. Since $d$ admits no EFX allocation, there exist agents $i,j$ and
a chore $e\in X_i$ such that $d_i(X_i\setminus e)>d_i(X_j)$. 
If $X_j=\emptyset$, then the strict inequality implies $(X_i\setminus e)\neq\emptyset$, and
therefore $c_i(X_i\setminus e)>0=\alpha c_i(X_j)$. Thus $X$ violates $\alpha$-EFX. 
Now suppose $X_j\neq\emptyset$. The strict inequality also implies that
$X_i\setminus e$ is nonempty. Since $d_i(X_i\setminus e)>d_i(X_j)$, we have
$\rho_i(X_i\setminus e)\ge \rho_i(X_j)+1$. Therefore
$c_i(X_i\setminus e)=\frac12\lambda^{\rho_i(X_i\setminus e)}\ge
\frac12\lambda^{\rho_i(X_j)+1}=\lambda c_i(X_j)>\alpha c_i(X_j)$,
where the last inequality uses $\alpha<\lambda$. Thus $X$ violates
$\alpha$-EFX. Since $X$ was arbitrary, no allocation is $\alpha$-EFX for any
$1\le \alpha<\lambda=2^{1/(L-1)}$.
\end{proof}

\paragraph{Warm-Up: A Direct Application to the CS24 Counterexample.}

Applying Lemma~\ref{lemma:compression} directly to the \(k=3\)
construction of \cite{CS24}, which has seven nonempty cost levels, gives a
subadditive instance with no \(\alpha\)-EFX allocation for any
\(\alpha<2^{1/6}\). We provide a short exposition below.

We have three agents and six chores $M=\{h, \ell_1, \ell_2, b_1, b_2, b_3\}$. Let $A=\{h, \ell_1, \ell_2\}$, $B=\{b_1, b_2, b_3\}$, and $B_{-i}=B\setminus \{b_i\}$. For a parameter $k > 2$, define the original cost functions $d_i$ as follows: $d_i(h)=k$, $d_i(\ell_1)=d_i(\ell_2)=1$, and $d_i(b_1)=d_i(b_2)=d_i(b_3)=0$, for all agents $i\in N$. For any bundle $S$ with $\lvert S \rvert \ge 2$, it holds \[d_i(S)=\begin{cases}
    k^2, & B_{-i} \subseteq S \text{ or } (b_i\in S \text{ and } S \cap A \neq \emptyset),\\
    \sum_{x \in S}d_i(x), & \text{otherwise}
\end{cases}\] 

This instance does not admit an EFX allocation; in fact, the approximation ratio grows unbounded as $k\to\infty$.
We set $k=3$; then, the possible values of nonempty bundles of each $d_i$ are contained in $\{0,1,2,3,4,5,9\}$. Indeed, penalized bundles have cost $9$, while non-penalized bundles have additive cost determined only by the chores in $A$. Thus, each agent has at most $L=7$ distinct cost levels. Explicitly, the complete cost function is given by $c_i(\emptyset)=0,$ 
and, for every nonempty bundle \(S\subseteq M\),
\[
c_i(S)=
\begin{cases}
2^{-1}, & d_i(S)=0,\\
2^{-5/6}, & d_i(S)=1,\\
2^{-4/6}, & d_i(S)=2,\\
2^{-3/6}, & d_i(S)=3,\\
2^{-2/6}, & d_i(S)=4,\\
2^{-1/6}, & d_i(S)=5,\\
1, & d_i(S)=9.
\end{cases}
\]

\paragraph{An Improved Inapproximability Bound.}
The direct application above preserves more levels than  necessary. We
now isolate the ordinal structure  and compress it
to four levels.

Let \(M=\{h,\ell_1,\ell_2,b_1,b_2,b_3\}\). Let
\(A=\{h,\ell_1,\ell_2\}\), \(L=\{\ell_1,\ell_2\}\), and
\(B=\{b_1,b_2,b_3\}\). For each agent \(i\in\{1,2,3\}\), define a
cost function \(d_i:2^M\to\{0,1,2,3\}\) by
\[
d_i(S)=
\begin{cases}
3, & B_{-i}\subseteq S,\\
3, & b_i\in S\text{ and }S\cap A\neq\emptyset,\\
2, & h\in S\text{ and neither level-\(3\) condition holds},\\
1, & S\cap L\neq\emptyset\text{ and neither level-\(3\) condition holds},\\
0, & \text{otherwise}
\end{cases}
\]
The level-\(3\) bundles are the triggered bundles. The remaining levels
distinguish bundles containing the heavy ordinary chore \(h\), bundles
containing only light ordinary chores, and safe bundles containing only special
chores.

\begin{lemma}
\label{lemma:four-level-chores}
The profile \(d=(d_1,d_2,d_3)\) admits no EFX allocation.
\end{lemma}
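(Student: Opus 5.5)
We argue by contradiction: fix an allocation $X=(X_1,X_2,X_3)$, assume it is EFX for $d$, and do a case analysis on how the special chores $B=\{b_1,b_2,b_3\}$ are distributed. Throughout we use the EFX condition in the form
\[
\max_{e\in X_i} d_i(X_i\setminus e)\le \min_{j\neq i} d_i(X_j),
\]
together with two structural facts read off the definition of $d_i$.
\begin{itemize}
\item[(F1)] A bundle $S$ is at level~$3$ for agent $i$ only if $S$ contains at least two chores of $B$, or contains $b_i$ together with some chore of $A$.
\item[(F2)] If $S$ is not triggered for $i$, then $d_i(S)$ depends only on $S\cap A$. It equals $2$ if $h\in S$, $1$ if $S\cap A\subseteq L$ is nonempty, and $0$ if $S\cap A=\emptyset$.
\end{itemize}
Since the agents are symmetric up to relabelling the special chores, we may reduce by symmetry within each case.

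\textbf{Case 1: one agent receives at least two chores of $B$.} Say agent $i$ does. Then $X_i$ contains $B_{-j}$ for some $j$, hence is triggered for every agent.

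\emph{Subcase 1a: $X_i\supseteq B$.} Removing one $b$ still leaves some $B_{-j}$. If $X_i\supseteq B_{-i}$, removing a chore of $X_i$ other than those in $B_{-i}$ (namely $b_i$ or an $A$-chore) keeps $d_i=3$. The other two bundles contain at most $A$-chores. By (F1) they are at level $\le 2$ for $i$, so EFX fails. If instead $X_i$ consists of exactly $B$, removing $b_i$ leaves $B_{-i}$, still at level $3$, and the same comparison applies.

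\emph{Subcase 1b: exactly two $b$'s in $X_i$.} Split according to whether these two are $B_{-i}$, or include $b_i$.
\begin{itemize}
\item If $X_i\supseteq B_{-i}$ and $X_i$ contains any further chore, removing that chore keeps level $3$. The other bundles contain at most one $b$, namely $b_i$. The one holding $b_i$ is at level $3$ for $i$ only if it also holds an $A$-chore. We then chase where the three $A$-chores can sit, and show that some agent holding at least two ordinary chores, or $h$ plus something, envies a bundle of strictly lower level.
\item If $X_i=B_{-i}$ exactly, agent $i$ is fine. We then apply the argument to the other agents, who split $A\cup\{b_i\}$ between them.
\item The case $X_i\supseteq\{b_i,b_j\}$ is handled symmetrically using the trigger $b_i\in S$, $S\cap A\neq\emptyset$.
\end{itemize}

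\textbf{Case 2: each agent receives exactly one chore of $B$.} This is the heart of the argument, and the step I expect to be the main obstacle, because here the three agents' special chores can be permuted in several ways.

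\emph{Subcase 2a: every agent holds its own $b_i$.} Any agent with an $A$-chore is triggered, and removing $b_i$ leaves a non-triggered bundle of level given by (F2). For agent $i$, the other bundles $\{b_j\}\cup(X_j\cap A)$ are not triggered by (F1). So by (F2) their levels are determined by $A$-content alone.

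A counting argument on $A=\{h,\ell_1,\ell_2\}$ then finishes this subcase. Some agent holds $h$, or holds two chores of $A$. Removing $b_i$ (or a light chore) from that agent's bundle leaves a level strictly above the $A$-content level of some other bundle: a bundle without $h$, or one with no $A$-chore at all. That is a violation.

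\emph{Subcase 2b: the $b$'s are deranged or partially swapped.} Then no agent is triggered through its own $b_i$, so all levels reduce to (F2). The argument becomes the identical-cost ordinal game on $A$ with an extra ``free'' chore per agent. Any agent holding an $A$-chore can drop its $b$ without lowering its level. Hence that agent's $A$-content level must be $\le$ every other agent's $A$-content level. With three $A$-chores of levels $2,1,1$, this forces each agent to get exactly one $A$-chore. But then the holder of $h$ compares level $2$ against level $1$, which is a contradiction.

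\textbf{Case 3: some $b$ is held together with others, but the split is $2$-$1$-$0$ with $X_i=B_{-i}$ exactly.} This was left over from Subcase~1b and is folded into the same (F2)-based comparison as Subcase~2b. We would organize the final write-up as this exhaustive split. I would first verify Subcase~2a carefully, since the permutation bookkeeping there is where a case is most likely to be missed.
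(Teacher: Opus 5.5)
Your top-level split (some agent holds at least two special chores, versus everyone holds exactly one) matches the paper's. However, Case~1 has a genuine hole: the configuration $X_i=\{b_i,b_j\}$ with $X_i\cap A=\emptyset$ is never handled. Your third bullet of Subcase~1b says $X_i\supseteq\{b_i,b_j\}$ is ``handled symmetrically using the trigger $b_i\in S$, $S\cap A\neq\emptyset$''. That trigger needs an ordinary chore in $X_i$. When there is none, every residual $X_i\setminus e$ is a single special chore at level~$0$, so agent $i$ never violates EFX. Your Case~3 only collects the leftover $X_i=B_{-i}$.

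This configuration is the most delicate one in the lemma, and it is not symmetric to $X_i=B_{-i}$. Write $k$ for the third index. Then $X_i=\{b_i,b_j\}=B_{-k}$ is at level~$3$ for agent $k$ but at level~$0$ for agent $j$. Moreover, agent $k$ can be triggered through her own $b_k$. So the (F2)-based comparison you invoke for Case~3 does not apply. The paper closes this case as follows. The remaining chores $A\cup\{b_k\}$ go to $j$ and $k$. Agent $j$ can take at most one of them, since otherwise she keeps an ordinary chore after a suitable deletion and envies $X_i$. If $X_j=\{b_k\}$, agent $k$ holds all of $A$; dropping a light chore leaves level~$2$ against $d_k(\{b_k\})=0$. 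Otherwise agent $k$ holds $b_k$ together with at least two ordinary chores. Dropping one leaves $b_k$ plus an ordinary chore, which is level~$3$ for $k$, while $X_j$ has no special chore and at most one ordinary chore, so it is at level at most~$2$ for $k$.

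Relatedly, your opening assertion in Case~1, that a bundle with two special chores ``is triggered for every agent'', is false. The bundle $B_{-j}$ by itself is at level~$3$ only for agent $j$ and at level~$0$ for the other two. Both exact-bundle cases work precisely because $X_i$ is at level~$0$ for some other agent.

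Two smaller points. First, in the first bullet of Subcase~1b no chasing of ordinary chores is needed. Only one of the other two bundles can hold $b_i$; the other contains no special chore and is therefore at level at most~$2$ for $i$. Second, in Subcase~2b the claim that ``all levels reduce to (F2)'' is wrong whenever agent $i$ does not hold $b_i$. The bundle holding $b_i$ is at level~$3$ for $i$ as soon as it contains an ordinary chore. (Also, in a transposition one agent does hold her own special chore.) So your inequality between $A$-content levels is unjustified as stated. The conclusion survives, and in fact all of Case~2 is a one-liner. The holder $r$ of $h$ drops her special chore and is left at level~$2$. At least one other bundle contains neither $h$ nor $b_r$, and since it has a single special chore it is at level at most~$1$ for $r$. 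So Case~2 is not the obstacle; the asymmetric case above is.
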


\begin{proof}
Let \(X=(X_1,X_2,X_3)\) be an arbitrary allocation. We show that \(X\) violates
EFX.

First suppose that some agent receives at least two chores from \(B\). By
symmetry, assume this agent is agent \(1\).

Suppose first that \(B\setminus\{b_1\}=\{b_2,b_3\}\subseteq X_1\). If
\(X_1\neq\{b_2,b_3\}\), choose \(e\in X_1\setminus\{b_2,b_3\}\). Then
\(d_1(X_1\setminus\{e\})=3\). Among \(X_2\) and \(X_3\), at least one bundle
has \(d_1\)-cost at most \(2\): only one of them can contain \(b_1\), and
neither can contain both \(b_2\) and \(b_3\). Thus agent \(1\) strongly envies
one of these bundles after removing \(e\).
If instead \(X_1=\{b_2,b_3\}\), then \(d_2(X_1)=d_3(X_1)=0\). The remaining
chores \(A\cup\{b_1\}\) are allocated to agents \(2\) and \(3\). One of these
agents receives at least two chores; since only one remaining chore is \(b_1\),
that bundle contains a chore from \(A\), and after deleting a suitable chore, the remaining bundle still has positive cost. Hence that agent strongly envies agent
\(1\), whose bundle has cost \(0\) for her.

It remains to consider the case in which agent \(1\) receives \(b_1\) and one
other special chore. By symmetry, assume \(\{b_1,b_3\}\subseteq X_1\). If
\(X_1\cap A\neq\emptyset\), then deleting \(b_3\) leaves \(b_1\) together with
a chore from \(A\), so \(d_1(X_1\setminus\{b_3\})=3\). The other two bundles
contain neither \(b_1\) nor both \(b_2,b_3\), so both have \(d_1\)-cost at most
\(2\). Thus EFX fails.
Now suppose \(X_1=\{b_1,b_3\}\). Then \(d_3(X_1)=0\), and the remaining chores
are \(A\cup\{b_2\}\). If agent \(3\) receives at least two of these remaining
chores, then after deleting a suitable chore her remaining bundle has positive
cost, so she strongly envies agent \(1\). Hence, in any EFX allocation, agent
\(3\) would have to receive at most one remaining chore. If agent \(3\) receives
\(b_2\), then agent \(2\) receives all three chores in \(A\), and deleting one
light chore leaves a positive-cost residual while \(d_2(X_3)=d_2(b_2)=0\).
If agent \(3\) does not receive \(b_2\), then agent \(2\) receives \(b_2\) and
at least two chores from \(A\); deleting one chore from \(A\) leaves \(b_2\)
together with another chore from \(A\), a level-\(3\) bundle for agent \(2\),
while \(X_3\) has \(d_2\)-cost at most \(2\). In both cases, EFX fails.

Therefore, no EFX allocation can have an agent receiving at least two chores
from \(B\). Hence, in any remaining candidate allocation, every agent receives
exactly one chore from \(B\).

Now suppose that some agent receives her matching special chore. By symmetry,
assume \(b_1\in X_1\). If \(|X_1\cap A|\ge 2\), deleting one chore from \(A\)
leaves \(b_1\) together with another chore from \(A\), a level-\(3\) bundle
for agent \(1\). The other two bundles split \(b_2\) and \(b_3\), contain no
\(b_1\), and therefore both have \(d_1\)-cost at most \(2\). Thus EFX fails.
If \(X_1=\{b_1\}\), then the three chores in \(A\) are split between agents
\(2\) and \(3\). One of these agents receives at least two chores from \(A\),
and hence has a positive-cost residual after deleting a suitable chore. For
that agent, \(X_1\) has cost \(0\), so EFX fails.
If \(X_1=\{b_1,h\}\), then deleting \(b_1\) leaves \(\{h\}\), which has level
\(2\) for agent \(1\). The other two bundles contain no \(h\), no \(b_1\), and
do not contain both \(b_2,b_3\), so both have \(d_1\)-cost at most \(1\). Hence
EFX fails.
Finally, suppose \(X_1=\{b_1,\ell\}\) for some \(\ell\in L\). Let
\(r\in\{2,3\}\) be the agent who receives \(h\). Agent \(r\)'s bundle contains
exactly one chore from \(B\). Deleting that \(B\)-chore leaves a residual bundle
containing \(h\), and hence of cost at least \(2\) for agent \(r\). On the other
hand, \(X_1\) has cost \(1\) for agent \(r\), since it contains one light
ordinary chore and is not a triggered bundle. Thus EFX fails.

Therefore, in any EFX allocation, no agent receives her matching special chore.
Since every agent receives exactly one chore from \(B\), it follows that each
agent receives a special chore \(b_j\) with \(j\neq i\).
Let \(r\) be the agent who receives \(h\). Agent \(r\)'s bundle contains a
nonmatching special chore. Delete this special chore. The remaining bundle
contains \(h\), and hence has \(d_r\)-cost at least \(2\). Among the two other
bundles, one contains neither \(h\) nor \(b_r\). Since every bundle contains
exactly one chore from \(B\), that bundle also cannot contain both chores in
\(B\setminus\{b_r\}\). Therefore it has \(d_r\)-cost at most \(1\). Agent \(r\)
strongly envies that bundle after deleting her nonmatching special chore.

All allocations violate EFX. Hence, the profile \(d\) admits no EFX
allocation.
\end{proof}

We now instantiate the lemma using the refined version above.

\begin{theorem}
\label{thm:subadditive}
An \(\alpha\)-EFX allocation need not exist for three agents with monotone
subadditive cost functions for any \(1\le \alpha<2^{1/3}\).
\end{theorem}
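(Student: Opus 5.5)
The plan is to combine Lemma~\ref{lemma:four-level-chores} with the compression Lemma~\ref{lemma:compression}, using $L=4$. First, I will check the hypotheses of Lemma~\ref{lemma:compression} for the refined profile $d=(d_1,d_2,d_3)$. Each $d_i$ maps $2^M$ into $\{0,1,2,3\}$, so its set of values on nonempty bundles has at most $4$ elements, and $L=4\ge 2$. It also needs to be monotone.

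Monotonicity is a short case check on the definition. Suppose $S\subseteq T$. If $S$ meets a level-$3$ condition, so does $T$: both $B_{-i}\subseteq S$ and ``$b_i\in S$ and $S\cap A\neq\emptyset$'' are preserved under taking supersets. Otherwise, if $h\in S$, then $h\in T$, so $d_i(T)\ge 2$. If $S\cap L\neq\emptyset$, then $T\cap L\neq\emptyset$, so $d_i(T)\ge 1$. Level $0$ is trivially dominated. Hence $d_i(S)\le d_i(T)$.

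Lemma~\ref{lemma:four-level-chores} says that $d$ admits no EFX allocation. Lemma~\ref{lemma:compression} therefore produces normalized monotone subadditive costs $c_1,c_2,c_3$ on the six chores with no $\alpha$-EFX allocation for every $1\le\alpha<2^{1/(4-1)}=2^{1/3}$. That is exactly the theorem.

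For concreteness I will write out the resulting costs: $c_i(\emptyset)=0$, and $c_i(S)=2^{-1},2^{-2/3},2^{-1/3},1$ according as $d_i(S)=0,1,2,3$.

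The main obstacle, the non-existence of EFX for the four-level profile, is already handled by Lemma~\ref{lemma:four-level-chores}. The only remaining delicate point is the monotonicity verification. It matters because Lemma~\ref{lemma:compression} requires monotone $d_i$, and the level-$2$ and level-$1$ clauses are defined negatively (``neither level-$3$ condition holds''). The case check above covers this.
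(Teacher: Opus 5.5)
Your proposal is correct and follows the paper's proof exactly: you apply Lemma~\ref{lemma:compression} with $L=4$ to the profile of Lemma~\ref{lemma:four-level-chores} and obtain the same compressed values $2^{-1},2^{-2/3},2^{-1/3},1$. Your explicit monotonicity check of $d_i$, which the paper leaves implicit, is correct and worth including; the concrete values hold because all four levels are attained on nonempty bundles (e.g.\ $\{b_i\}$, $\{\ell_1\}$, $\{h\}$, $\{b_i,h\}$), so that $\rho_i=d_i$.
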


\begin{proof}
By Lemma~\ref{lemma:four-level-chores}, the four-level profile satisfies Lemma~\ref{lemma:compression} with $L=4$. Therefore, the compressed costs are normalized, monotone, subadditive, and no $\alpha$-EFX exists for $1\le \alpha <2^{1/3}$. The empty bundle is mapped to $0$, while nonempty bundles of levels
\(0,1,2,3\) are mapped to \(2^{-1},2^{-2/3},2^{-1/3},1\), respectively.
\end{proof}

We note that the construction easily extends to every $m\ge6$ by adding dummy chores and 
letting $c_i'(S)=c_i(S \cap M)$, where $M$ is the original chore set. Then \(c_i'\) remains normalized, monotone, and subadditive. Hence the same
inapproximability bound holds for all \(m\ge 6\).

%% file: submodular_chores.tex
\section{Submodular Chores}

We now switch to submodular cost functions. The non-existence of EFX allocations in this case also follows from \cite{CS24} and \cite{akrami2026}, since it is shown that monotone comparison orders over bundles can be realized by monotone submodular set functions. Since EFX is comparison-based, this immediately implies that the non-existence example of \cite{CS24} for monotone costs can be converted into exact non-existence for submodular costs. This observation settles the non-existence of exact EFX allocations for the classes considered here, but is purely existential. We give a weighted-coverage realization of the construction obtained in Section~\ref{sec:subadditiveChores}, yielding an improved lower bound for submodular cost functions.

We present the following proposition here for completeness, stated and proved as
Proposition~1 in \cite{akrami2026} and attributed there to Bhaskar Ray
Chaudhury and Uriel Feige.

\begin{proposition}[Chaudhury, Feige; Proposition~1 of \cite{akrami2026}]
\label{prop:submodular}
    Let $M$ be a finite set and let $\prec$ be a total ordering on $2^M$ satisfying $S \subset T \implies S \prec T$. Then there exists a submodular function $f:2^M \to \mathbb R_{\ge 0}$ such that $S\prec T \implies f(S) < f(T)$.
\end{proposition}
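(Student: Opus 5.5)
The plan is to realize the total order by an explicit construction that is submodular and strictly order-preserving. Enumerate $2^M$ along the order as $\emptyset=S_0\prec S_1\prec\cdots\prec S_K=M$, where $K=2^{|M|}-1$. Minimality of $\emptyset$ and maximality of $M$ follow from the hypothesis $S\subset T\Rightarrow S\prec T$. Define
\[
f(S_r)=r+\varepsilon\,g(S_r),
\]
where $g$ is a fixed strictly submodular "curvature" function and $\varepsilon>0$ is small. Taking $f(S_r)=r$ alone already gives strict order preservation, but rank need not be submodular. The alternative is to build $f$ so that the ordering is carried by small perturbations of a dominant submodular base.

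Concretely, I would take the base to be a strictly concave function of cardinality, $F(S)=\phi(|S|)$ with $\phi$ strictly concave, e.g.\ $\phi(k)=Ck-k^2$ with $C$ large. I would then add a tie-breaking perturbation $\delta\, r(S)$, where $r(S)$ is the position of $S$ in the order, and set
\[
f(S)=\phi(|S|)+\delta\, r(S).
\]
The steps are as follows.

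\begin{enumerate}
\item \textbf{Submodularity.} Check the decreasing-marginals form. For $S\subseteq T$ and $e\notin T$, the base contributes a marginal gap $\phi(|S|+1)-\phi(|S|)-\phi(|T|+1)+\phi(|T|)\ge 2$ whenever $|S|<|T|$. The perturbation changes each marginal by at most $\delta K$, so choosing $\delta<1/K$ preserves the inequality. When $|S|=|T|$ we have $S=T$, and the inequality is trivial.
\item \textbf{Monotonicity.} Choose $C$ large, e.g.\ $C>2|M|+1$, so that $\phi$ is increasing on $\{0,\dots,|M|\}$. Since the hypothesis forces $r$ to be monotone under inclusion, $f$ is monotone as well.
\item \textbf{Order preservation.} Strictness is the problem: if $S\prec T$ but $|S|>|T|$, the base term dominates and gives $f(S)>f(T)$, which is wrong.
\end{enumerate}

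The third step is the main obstacle, and it shows that the cardinality base has to be replaced. The fix I would try is to let the base itself encode the order's coarse structure while staying submodular. Take
\[
f(S)=\sum_{r\ge1}w_r\,\mathbf 1[S\not\subseteq \overline{\,\cdot\,}]
\]
in coverage form, i.e.\ a weighted coverage or "max-type" function. One candidate is
\[
f(S)=\max_{T\preceq S}\ \mathrm{val}(T),
\]
which is not obviously submodular. A cleaner candidate is
\[
f(S)=\sum_{j=1}^{K} a_j\,\mathbf 1\bigl[S\not\subseteq U_j\bigr],
\]
where $U_j$ ranges over the complements of the sets. Each indicator $S\mapsto\mathbf 1[S\not\subseteq U]$ is monotone submodular, being a coverage function, and nonnegative sums of such functions remain submodular.

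So I would look for weights $a_U\ge0$ over all $U\subsetneq M$ with
\[
f(S)=\sum_{U\not\supseteq S}a_U.
\]
The requirement becomes that the map $S\mapsto\sum_{U\not\supseteq S}a_U$ is strictly increasing along $\prec$. I would choose the weights by induction along the order, so that the set of $U$ not containing $S$ grows in a controlled way. As a fallback, if a pure coverage realization is not available, I would solve the finite system of strict linear inequalities (order preservation together with the submodular inequalities) and prove feasibility by Farkas' lemma. Showing that the dual has no nonnegative certificate, using only $S\subset T\Rightarrow S\prec T$, is the step I expect to be hardest.
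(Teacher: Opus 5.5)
The paper does not reprove this proposition; it cites Proposition~1 of \cite{akrami2026}. Your proposal therefore has to stand on its own, and as written it does not: it is a search plan rather than a proof. The only fully specified construction, $\phi(|S|)+\delta\,r(S)$, is one you correctly show fails order preservation. The coverage ansatz $f(S)=\sum_{U\not\supseteq S}a_U$ is left with unspecified weights (``by induction along the order''). The Farkas fallback stops exactly at the step you flag as hardest. So the decisive point is never established: why some choice makes $f$ strictly increasing along $\prec$ while keeping it submodular. In addition, the display $\sum_{r\ge1}w_r\mathbf 1[S\not\subseteq\overline{\,\cdot\,}]$ and the candidate $\max_{T\preceq S}\mathrm{val}(T)$ are not well defined.

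The missing idea is geometric domination along the order, and it closes your own coverage route directly. Index $2^M$ as $S_0\prec\cdots\prec S_K$ with rank $r$, and set $a_U=2^{K-r(U)}$ for $U\subsetneq M$. Then $f(S)=\sum_{U\subsetneq M}a_U-h(S)$, where $h(S)=\sum_{U\subsetneq M,\,U\supseteq S}a_U$. Since $U\supsetneq S$ forces $S\prec U$, for $S\neq M$ the sum $h(S)$ consists of distinct powers of two whose largest is $2^{K-r(S)}$. Hence $2^{K-r(S)}\le h(S)<2^{K-r(S)+1}$, while $h(M)=0$. It follows that $S\prec T$ gives $h(T)<h(S)$, i.e.\ $f(S)<f(T)$. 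Each term $\mathbf 1[S\not\subseteq U]=\chi_{M\setminus U}(S)$ is a coverage atom, so $f$ is a nonnegative weighted coverage function. In particular it is monotone submodular, which is slightly more than the proposition asks.

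An even shorter route, with no coverage structure, is to compose the rank with a steeply concave map: $f(S)=1-2^{-r(S)}$. Take $S\subsetneq T$ and $e\notin T$. The marginal at $S$ is at least $2^{-r(S)}-2^{-r(S)-1}=2^{-r(S)-1}$. The marginal at $T$ is less than $2^{-r(T)}\le 2^{-r(S)-1}$. This gives decreasing marginals, and strict order preservation is immediate. Either way, no Farkas argument is needed.
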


That is, any monotone total ordering of
bundles can be realized by a monotone submodular set function while preserving
all strict comparisons. Although Proposition~\ref{prop:submodular} is stated for goods, it is a
statement about set functions and therefore applies equally to cost functions
for chores.

\begin{corollary}
\label{cor:submodular-chores-no-efx}
There exists a three-agent, six-chore instance with monotone submodular cost
functions that admits no EFX allocation.
\end{corollary}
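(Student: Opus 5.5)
The plan is to combine Lemma~\ref{lemma:four-level-chores} with Proposition~\ref{prop:submodular}. EFX is a purely comparison-based condition. It only asks whether $d_i(X_i\setminus e) > d_i(X_j)$ for pairs of bundles. So any cost profile that preserves every strict comparison of the four-level profile $d=(d_1,d_2,d_3)$ on $M=\{h,\ell_1,\ell_2,b_1,b_2,b_3\}$ also admits no EFX allocation.

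\textbf{Step 1: extend each $d_i$ to a strict total order.} For each agent $i$, I build a total order $\prec_i$ on $2^M$ that satisfies two conditions:
\begin{itemize}
\item[(a)] $d_i(S)<d_i(T)$ implies $S\prec_i T$;
\item[(b)] $S\subsetneq T$ implies $S\prec_i T$.
\end{itemize}
To do this, sort bundles primarily by $d_i$-value, with $\emptyset$ placed first. Break ties within a level by cardinality, and break remaining ties arbitrarily. Condition (a) holds by construction. Condition (b) holds because $d_i$ is monotone: if $S\subsetneq T$, then either $d_i(S)<d_i(T)$, or the two values are equal and $|S|<|T|$. In both cases $S\prec_i T$.

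I must check that $d_i$ is monotone, including that $\emptyset$ sits at the lowest level. This is immediate from the definition. The level-$3$ triggers are upward closed. Containing $h$ is upward closed. A bundle that meets $L$ but has no level-$3$ trigger and no $h$ gets value $1$, and any superset either stays at value $1$ or moves to a higher level.

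\textbf{Step 2: realize the orders by submodular functions.} Apply Proposition~\ref{prop:submodular} to each $\prec_i$ to obtain a submodular $f_i$ with $S\prec_i T \Rightarrow f_i(S)<f_i(T)$. By condition (b), $f_i$ is strictly monotone. I then set $c_i(S)=f_i(S)-f_i(\emptyset)$. Subtracting a constant preserves submodularity and all comparisons, and makes $c_i$ normalized and nonnegative.

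\textbf{Step 3: transfer the EFX violations.} Take any allocation $X$. Lemma~\ref{lemma:four-level-chores} gives agents $i,j$ and a chore $e\in X_i$ with $d_i(X_i\setminus e)>d_i(X_j)$. By condition (a), $X_j\prec_i X_i\setminus e$, so $c_i(X_i\setminus e)>c_i(X_j)$. Hence $X$ is not EFX under $c$.

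The main point requiring care is Step 1. Proposition~\ref{prop:submodular} needs a \emph{total} order that is strictly monotone under inclusion, whereas $d_i$ has many ties. The cardinality tie-break resolves this, together with the verification that $d_i$ is monotone. Everything else is a direct invocation of earlier results.
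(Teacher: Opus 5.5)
Your proposal is correct and follows the paper's route: since EFX only involves strict comparisons of bundle costs, Proposition~\ref{prop:submodular} turns a monotone ordinal counterexample into a submodular one. There are two minor differences. The paper applies this to the original \cite{CS24} instance, while you apply it to the refined profile of Lemma~\ref{lemma:four-level-chores}; this is equally valid, since both are three-agent, six-chore monotone counterexamples. You also spell out two details the paper leaves implicit: the tie-breaking into a strict total order compatible with inclusion, and the normalization $c_i=f_i-f_i(\emptyset)$.
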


We now present a submodular realization of the ordinal profile that relies on the construction obtained in Section~\ref{sec:subadditiveChores}. Recall that applying Proposition~\ref{prop:submodular} of \cite{akrami2026} does establish the non-existence of EFX allocations under submodular cost functions, but it does not yield by itself a large multiplicative gap. We present a tighter construction below, which implements the four-level ordinal skeleton of the modified \cite{CS24} counterexample by weighted coverage functions.

\subsection{A Weighted-Coverage Realization}

We next show that the refined four-level obstruction can be realized by
weighted coverage functions. 

Let \(M=\{h,\ell_1,\ell_2,b_1,b_2,b_3\}\), let
\(A=\{h,\ell_1,\ell_2\}\), \(L=\{\ell_1,\ell_2\}\), and
\(B=\{b_1,b_2,b_3\}\). For each agent \(i\), write
\(B_{-i}=B\setminus\{b_i\}\).
Recall the four-level ordinal profile:
\[
d_i(S)=
\begin{cases}
3, & B_{-i}\subseteq S,\\
3, & b_i\in S\text{ and }S\cap A\neq\emptyset,\\
2, & h\in S\text{ and neither level-\(3\) condition holds},\\
1, & S\cap L\neq\emptyset\text{ and neither level-\(3\) condition holds},\\
0, & \text{otherwise}
\end{cases}
\]
The level-\(3\) bundles are the triggered bundles: either the bundle contains
both special chores in \(B_{-i}\), or it contains \(b_i\) together with an
ordinary chore.
For \(R\subseteq M\), let \(\chi_R(S)=1\) if \(S\cap R\neq\emptyset\), and
\(\chi_R(S)=0\) otherwise. For each agent \(i\), define
\[
c_i(S)=
7\sum_{b\in B_{-i}}\chi_{A\cup\{b\}}(S)
+\sum_{b\in B_{-i}}\chi_{\{b_i,b\}}(S)
+2\sum_{b\in B_{-i}}\chi_{\{h,b_i,b\}}(S).
\]
This is a weighted coverage function. The three families of atoms are chosen so that their total weights create gaps between the four levels of $d_i$.
The atoms \(A\cup\{b\}\), for $b \in B_{-i}$, detect ordinary
chores, the atoms \(\{h,b_i,b\}\) separate bundles containing \(h\) from bundles
containing only light ordinary chores, and the atoms \(\{b_i,b\}\) separate the
triggered special bundles from the lower levels.

\begin{lemma}
\label{lemma:coverage-realization}
For every agent \(i\), the function \(c_i\) is monotone submodular. Moreover,
for all bundles \(S,T\subseteq M\), if \(d_i(S)>d_i(T)\), then
\(c_i(S)\ge \frac{20}{19}c_i(T)\).
\end{lemma}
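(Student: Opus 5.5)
The plan is to compute, for a fixed agent \(i\), the exact range of \(c_i\) on each of the four levels of \(d_i\). The claim then reduces to comparing the minimum of one level with the maximum of the level below it. Monotonicity and submodularity are immediate: each \(\chi_R\) is the indicator that \(S\) hits \(R\), which is a monotone submodular coverage function. A nonnegative combination of such functions is again monotone submodular.

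Write \(B_{-i}=\{b,b'\}\). The six atoms and their weights are as follows: the two big atoms \(A\cup\{b\}\) and \(A\cup\{b'\}\) have weight \(7\) each; the two pair atoms \(\{b_i,b\}\) and \(\{b_i,b'\}\) have weight \(1\) each; the two \(h\)-atoms \(\{h,b_i,b\}\) and \(\{h,b_i,b'\}\) have weight \(2\) each. The total weight is \(20\), so \(c_i\le 20\) everywhere. I would record which atoms each chore hits:
\begin{itemize}
\item any ordinary chore hits both big atoms, for \(14\);
\item \(h\) additionally hits both \(h\)-atoms, for \(18\) in total;
\item \(b_i\) hits both pair atoms and both \(h\)-atoms, for \(6\);
\item \(b\) hits one big atom, one pair atom and one \(h\)-atom, for \(10\).
\end{itemize}

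I would then go level by level.

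\emph{Level 3.} If \(\{b,b'\}\subseteq S\), every atom is hit. If \(b_i\in S\) together with an ordinary chore, the big atoms, pair atoms and \(h\)-atoms are all hit. In both cases \(c_i(S)=20\).

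\emph{Level 2.} Here \(h\in S\) and the bundle is not triggered. Then \(S\) contains neither \(b_i\) nor both of \(b,b'\), so \(S\subseteq A\cup\{b\}\) or \(S\subseteq A\cup\{b'\}\). Hence \(18\le c_i(S)\le 19\), where the \(+1\) comes from a single pair atom.

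\emph{Level 1.} Here \(S\) contains a light chore, no \(h\), no \(b_i\), and at most one of \(b,b'\). Then \(14\le c_i(S)\le 14+1+2=17\).

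\emph{Level 0.} Here \(S\subseteq B\), with \(b_i\) allowed and at most one of \(b,b'\). The possible values are \(0\), \(6\), \(10\), and \(c_i(\{b_i,b\})=7+2+4=13\). So \(c_i(S)\le 13\).

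The conclusion follows from these ranges. If \(d_i(S)>d_i(T)\), then \(c_i(S)\ge\min(\text{level } d_i(T)+1)\) and \(c_i(T)\le\max(\text{level } d_i(T))\). The three adjacent ratios are
\[
20/19,\qquad 18/17,\qquad 14/13 .
\]
Non-adjacent level pairs only give larger ratios. When \(c_i(T)=0\) the inequality is trivial. The smallest ratio is \(20/19\), which gives \(c_i(S)\ge\frac{20}{19}c_i(T)\).

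The main obstacle is the bookkeeping: for each level one must identify the worst-case bundle, with the largest value at the lower level and the smallest value at the higher level. This requires using the non-triggered conditions carefully, since they are what exclude \(b_i\) from levels 1 and 2 and exclude both \(b,b'\) appearing together below level 3.
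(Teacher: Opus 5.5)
Your proposal is correct and follows the paper's proof essentially step for step. Submodularity comes from writing $c_i$ as a nonnegative combination of coverage atoms, and the level-by-level value sets $\{20\}$, $\{18,19\}$, $\{14,17\}$ (you only bound this one by $[14,17]$, which suffices) and $\{0,6,10,13\}$ give the adjacent ratios $20/19$, $18/17$, $14/13$. Your remark that non-adjacent levels give larger ratios, because both the minima and the maxima increase with the level, makes explicit a step the paper leaves implicit.
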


\begin{proof}
Each atom \(\chi_R\) is a coverage function: it records whether the bundle
intersects \(R\). Hence each \(\chi_R\) is monotone submodular. Since \(c_i\) is
a nonnegative linear combination of such atoms, it is monotone submodular.
It remains to prove the separation claim. We compute the possible values of
\(c_i(S)\) by ordinal level.

If \(d_i(S)=3\), then \(S\) intersects all six coverage atoms. Indeed, if
\(B_{-i}\subseteq S\), then each atom indexed by a chore in \(B_{-i}\) fires.
If \(b_i\in S\) and \(S\cap A\neq\emptyset\), then the ordinary chore from $A$ hits
the two atoms \(A\cup\{b\}\), while \(b_i\) hits the remaining four atoms.
Therefore \(c_i(S)=2(7+1+2)=20\).

If \(d_i(S)=2\), then \(h\in S\), \(b_i\notin S\), and \(S\) contains at most
one chore from \(B_{-i}\). The two atoms \(A\cup\{b\}\) and the two atoms
\(\{h,b_i,b\}\), for \(b\in B_{-i}\), always fire due to $h$. If \(S\) also contains one
chore from \(B_{-i}\), then one additional atom of the form \(\{b_i,b\}\) fires.
Hence \(c_i(S)\in\{18,19\}\).

If \(d_i(S)=1\), then \(S\) contains at least one light chore, contains no
\(h\), contains no \(b_i\), and contains at most one chore from \(B_{-i}\). The
two atoms \(A\cup\{b\}\), for \(b\in B_{-i}\), always fire. If \(S\) also
contains one chore from \(B_{-i}\), then the corresponding atoms
\(\{b_i,b\}\) and \(\{h,b_i,b\}\) also fire. Hence \(c_i(S)\in\{14,17\}\).

Finally, if \(d_i(S)=0\), then \(S\) is either empty or a safe special-only
bundle: \(\{b_i\}\), \(\{b\}\) for \(b\in B_{-i}\), or \(\{b_i,b\}\) for
\(b\in B_{-i}\). Direct evaluation gives \(c_i(S)\in\{0,6,10,13\}\). 
Thus, the possible values by level are:
\[
\begin{array}{c|c}
d_i(S) & \text{possible values of }c_i(S)\\
\hline
0 & 0,\ 6,\ 10,\ 13\\
1 & 14,\ 17\\
2 & 18,\ 19\\
3 & 20
\end{array}
\]
The gaps between different levels are at least
$14/13, 18/17, 20/19$.
The minimum of these ratios is $20/19$. Therefore, whenever
\(d_i(S)>d_i(T)\), we have \(c_i(S)\ge (20/19)c_i(T)\).
\end{proof}

\begin{theorem}
\label{thm:coverage-20-19}
There exists a three-agent, six-chore instance with monotone weighted-coverage
cost functions such that no \(\alpha\)-EFX allocation exists for any
\(1\le \alpha<20/19\).
\end{theorem}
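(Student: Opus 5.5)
The plan is to combine Lemma~\ref{lemma:four-level-chores} with Lemma~\ref{lemma:coverage-realization}, in the same way Theorem~\ref{thm:subadditive} combined Lemma~\ref{lemma:four-level-chores} with Lemma~\ref{lemma:compression}. The instance is the three-agent, six-chore instance on $M=\{h,\ell_1,\ell_2,b_1,b_2,b_3\}$ with the weighted-coverage costs $c_i$ defined above. By Lemma~\ref{lemma:coverage-realization}, each $c_i$ is monotone and submodular (indeed a weighted coverage function). It is also normalized, since every atom $\chi_R$ vanishes on $\emptyset$.

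Fix $1\le\alpha<20/19$ and an arbitrary allocation $X=(X_1,X_2,X_3)$. By Lemma~\ref{lemma:four-level-chores}, $X$ is not EFX with respect to $d$. Hence there exist agents $i,j$ and a chore $e\in X_i$ with $d_i(X_i\setminus e)>d_i(X_j)$. The separation statement of Lemma~\ref{lemma:coverage-realization}, applied to $S=X_i\setminus e$ and $T=X_j$, gives $c_i(X_i\setminus e)\ge \tfrac{20}{19}c_i(X_j)$.

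It remains to upgrade this to a strict inequality $c_i(X_i\setminus e)>\alpha\, c_i(X_j)$, which needs a small case split on whether $c_i(X_j)$ is zero.
\begin{itemize}
\item If $c_i(X_j)>0$, then $\tfrac{20}{19}c_i(X_j)>\alpha c_i(X_j)$ because $\alpha<20/19$, and we are done.
\item If $c_i(X_j)=0$, the table in the proof of Lemma~\ref{lemma:coverage-realization} shows that $X_j=\emptyset$: every nonempty bundle has $c_i$-value at least $6$. Since $d_i(X_i\setminus e)>d_i(X_j)\ge 0$, the bundle $X_i\setminus e$ has level at least $1$, so $c_i(X_i\setminus e)\ge 14>0=\alpha c_i(X_j)$.
\end{itemize}
In either case $X$ violates $\alpha$-EFX under $c$. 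Since $X$ was arbitrary, no $\alpha$-EFX allocation exists.

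There is no real obstacle beyond this zero-cost case, which is exactly where the multiplicative gap alone would give only a non-strict inequality. The table of values by level settles it: the only bundle of $c_i$-value $0$ is $\emptyset$, and every bundle of level at least $1$ has strictly positive cost.
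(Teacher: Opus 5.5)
Your proof is correct and follows the paper's argument: you combine Lemma~\ref{lemma:four-level-chores} with the separation property of Lemma~\ref{lemma:coverage-realization}, and treat the empty-bundle case separately. Your case split on $c_i(X_j)=0$, using that every nonempty bundle has $c_i$-value at least $6$, makes explicit the positivity of $c_i(X_j)$ that the paper uses only implicitly when it passes from $c_i(X_i\setminus e)\ge \frac{20}{19}c_i(X_j)$ to the strict violation $c_i(X_i\setminus e)>\alpha\, c_i(X_j)$.
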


\begin{proof}
By Lemma~\ref{lemma:four-level-chores}, the profile
\(d=(d_1,d_2,d_3)\) admits no EFX allocation. Hence, for every allocation
\(X=(X_1,X_2,X_3)\), there exist agents \(i,j\) and a chore \(e\in X_i\) such
that \(d_i(X_i\setminus e)>d_i(X_j)\). If $X_j$ is the empty set, the claim is immediate. Otherwise, if $X_j \neq \emptyset$, by
Lemma~\ref{lemma:coverage-realization}, this implies
\(c_i(X_i\setminus e)\ge (20/19)c_i(X_j)\). Therefore, for every
\(\alpha<20/19\), the allocation \(X\) violates \(\alpha\)-EFX. Since \(X\) was
arbitrary, no \(\alpha\)-EFX allocation exists.
\end{proof}

%% file: discussion.tex
\section{Discussion} 
We take a step towards a better understanding of EFX allocations for indivisible chores. Our results establish constant-factor inapproximability guarantees for chores. Our results rely on the ordinal nature of the known constructions; once a suitable ordinal obstruction is identified, it can be realized within different valuation classes by carefully chosen cardinal transformations. The
cardinal transformation determines the valuation class and the strength of the
resulting inapproximability guarantee. This perspective allows us to transfer and adapt ideas across well-known valuation classes, and highlights structural connections between EFX impossibility results for goods and chores.

\paragraph{AI Disclosure.}
The author used OpenAI ChatGPT to assist with exploratory proof search, including proposing proof strategies and potential counterexamples. Some AI-generated suggestions informed the development of the final argument, but all results were independently checked, substantially rewritten, and verified by the human author. The author takes full responsibility for all claims, proofs, references, and final text.